\theoremstyle{plain}
\newtheorem{thm}{Theorem}[section]
\newtheorem{lem}[thm]{Lemma}
\newtheorem{prop}[thm]{Proposition}
\newtheorem{rem}[thm]{Remark}
\theoremstyle{definition}
\newtheorem{exmp}[thm]{Example}
\numberwithin{equation}{section} \errorcontextlines=0
\newcommand{\Rmnum}[1]{\expandafter\@slowromancap\romannumeral #1@}
\begin{document}
\title{Tighter bounds of generalized monogamy and polygamy relations}
\author{Yue Cao}
\address{School of Mathematics, South China University of Technology,
Guangzhou, Guangdong 510640, China}
\email{434406296@qq.com}
\author{Naihuan Jing*}
\address{Department of Mathematics, North Carolina State University, Raleigh, NC 27695, USA}
\email{jing@ncsu.edu}
\author{Yiling Wang}
\address{Department of Mathematics, North Carolina State University, Raleigh, NC 27695, USA}
\email{ywang327@ncsu.edu}
\subjclass[2010]{Primary: 85; Secondary:}\keywords{Generalized monogamy, generalized polygamy, Concurrence, Concurrence of assistance (CoA), Negativity}
\thanks{$*$Corresponding author: jing@ncsu.edu}
\begin{abstract}
 We study generalized monogamy and polygamy relations for concurrence of assistance and negativity of assistance
 using parametrized bounds in general multi-partite quantum systems. The new method overcomes the shortcomings of previous studies where a method is only good at a particular region. We provide detailed examples to show why the new approach is effective in all regions.
\end{abstract}
\maketitle

\section{\textbf{Introduction}}

Quantum information theory has made significant strides in recent decades, largely due to the fascinating properties of quantum entanglement \cite{CAF,HHHH,DFSC}. This distinctly quantum phenomenon defies classical intuition by establishing correlations between particles that are independent of their spatial separation. These correlations are fundamental to quantum technologies, such as quantum computing, quantum cryptography, and quantum communication. A deep understanding of how entanglement can be distributed and controlled is essential for advancing these technologies.

A particularly intriguing aspect of entanglement in multipartite quantum systems is its constrained nature, which is encapsulated in the concepts of monogamy and polygamy. Monogamy implies that if two subsystems are highly entangled, their ability to share entanglement with additional subsystems is limited \cite{ASI,P,AMG}. This principle, formally introduced by Coffman, Kundu, and Wootters \cite{CKW}, has become a cornerstone in the study of quantum correlations:
$C^{2}(\rho_{A|BC})\geq C^{2}(\rho_{AB})+C^{2}(\rho_{AC})
$, where $\rho_{AB}$ and $\rho_{AC}$ are the reduced density matrices.
Conversely, polygamy describes scenarios where entanglement can be shared more freely across multiple subsystems, providing a complementary perspective to monogamy \cite{OV, G,JLLF,ZF1,KPSS,KDS,OF}. Gour et al \cite{GMS} established the first polygamy inequality by using the squared concurrence of assistance (CoA):
$
C^{2}(|\psi\rangle_{A|BC})\leq C^{2}_{a}(\rho_{AB})+C^{2}_{a}(\rho_{AC}),
$
where $\rho_{AB}$ and $\rho_{AC}$ are the reduced density matrices of $|\psi\rangle_{ABC}\langle\psi|$.

Extensive research has explored the mathematical formulation of these principles, particularly through entanglement measures such as concurrence and negativity 
\cite{CK, OF, LCO}. Concurrence, initially developed to quantify entanglement between pairs of qubits, has been extended to investigate entanglement distribution in larger multipartite systems. Over time, generalized monogamy and polygamy inequalities \cite{K1, K2, CJW, ZF2,YCLZ,JF,JFL, CJMW} based on concurrence have been formulated, offering insights into entanglement behavior across complex quantum systems. These inequalities are crucial for quantum information security, where precise control over entanglement distribution is vital.

Despite these advancements, current monogamy and polygamy relations \cite{JFQ, ZJZ1,ZJZ2,ZJLM, SXZWF} are often not tight enough, meaning that the inequalities do not fully capture the constraints on entanglement distribution in certain scenarios. This lack of precision can lead to suboptimal results in practical applications, such as quantum cryptography \cite{GRTZ}, where the robustness of security protocols is closely linked to accurate entanglement characterization. Thus, there is a pressing need to refine these inequalities to provide tighter bounds that more accurately reflect quantum correlation behavior.

In this work, we address this challenge by proposing a new set of parameterized bounds that enhance the generalized monogamy and polygamy relations. By introducing a family of tighter bounds based on the $\alpha$th power of concurrence, we demonstrate a more refined understanding of entanglement distribution in N-qubit systems. Our results improve existing inequalities and offer greater flexibility, as the parameterization allows for a broad range of applications depending on the specific characteristics of the quantum system \cite{GRTZ}.

Additionally, we extend our analysis to the measure of negativity, which is particularly useful for studying entanglement in mixed quantum states. The connection between negativity and concurrence provides a natural framework for applying our new bounds to a wider array of quantum systems, ensuring applicability across different contexts, from pure to mixed states.

To illustrate the impact of our approach, we provide detailed comparisons between our new bounds and those in the current literature. Through explicit examples, we show that our parameterized bounds consistently outperform previous results \cite{CJW,ZF2,YCLZ,JF,JFL}, offering tighter constraints on both monogamy and polygamy relations. This advancement not only enhances theoretical understanding of quantum entanglement distribution but also has practical implications for developing more secure and efficient quantum information protocols.

The remainder of this paper is structured as follows: Section 2 outlines the mathematical tools and techniques used to derive our new bounds. Section 3 presents the tighter polygamy relations, followed by Section 4, which develops optimized bounds for monogamy relations. Finally, we conclude with a series of examples that demonstrate the strength and versatility of our new approach.
\section{\textbf{Preliminaries}}\label{s:prelim}

\subsection{\textbf{Main problems}}
The concurrence is an entanglement monotone of a mixed bipartite quantum state. For a pure
state $\rho= |\psi\rangle_{A B}\in\mathcal{H}_A \otimes \mathcal{H}_B$, it is defined by \cite{AF,RBCGM,U}
\begin{equation}\label{e:Pure}
C\left(|\psi\rangle_{A B}\right)=\sqrt{2\left[1-\operatorname{Tr}\left(\rho_A^2\right)\right]}=\sqrt{2\left[1-\operatorname{Tr}\left(\rho_B^2\right)\right]},
\end{equation}
where $\rho_A$ (resp. $\rho_B$) is the reduced density matrix by tracing over the subsystem $B$ (resp. $A$).

For a mixed state $\rho_{AB}$, 
the concurrence and the concurrence of assistance (CoA) are defined respectively by \cite{DFMSTU, YS}
 \begin{equation}\label{e:Mixed}
 C\left(\rho_{A B}\right)=\min _{\left\{p_i,\left|\psi_i\right\rangle\right\}} \sum_i p_i C\left(\left|\psi_i\right\rangle\right), \
C_{a}\left(\rho_{A B}\right)=\max _{\left\{p_i,\left|\psi_i\right\rangle\right\}} \sum_i p_i C\left(\left|\psi_i\right\rangle\right),
\end{equation}
where the minimum/maximum is taken over all possible convex roofs of pure state decompositions: $\rho_{A B}=\sum_i p_i\left|\psi_i\right\rangle\left\langle\psi_i\right|$ with $p_i \geqslant 0, \sum_i p_i=1$ and $\left|\psi_i\right\rangle \in \mathcal{H}_A \otimes \mathcal{H}_B$.

\medskip
Recall that the linear entropy of a state $\rho$ is defined as \cite{SF}:
\begin{equation}\label{e:entropy}
T(\rho)=1-\operatorname{Tr}\left(\rho^2\right)
\end{equation}
and for a bipartite state $\rho_{A B}$,
$T(\rho_{A B})$ has the following property \cite{ZGZG}:
\begin{equation}\label{e:property}
|T(\rho_{A})-T(\rho_{B})|\leq T(\rho_{AB})\leq T(\rho_{A})+T(\rho_{B}).
\end{equation}

\medskip

For any $N$-qubit pure state $|\psi\rangle_{ABC_{1}\cdots C_{N-2}}$, it follows from \eqref{e:Pure} and \eqref{e:entropy} that
\begin{equation}\label{e:Pure-N}
 C^{2}(|\psi\rangle_{AB|C_{1}\cdots B_{N-2}})=2\left[1-\operatorname{Tr}\left(\rho_{AB}^2\right)\right]=2T(\rho_{AB}).
\end{equation}

 \bigskip

The authors \cite{YCLZ,JF,JFL,SXZWF} have presented some generalized monogamy and generalized polygamy relations of concurrence for $N$-qubit systems under the partition $AB|C_{1}\cdots C_{N-2}$, where lower (resp. upper) bound of the monogamy (resp. polygamy) relation for the $\alpha$th $(0\leq \alpha \leq 2)$ power of concurrence were considered.
We will focus on two questions in this paper:
\begin{enumerate}[label= {\bf(\roman*)}]
\item Tighter generalized monogamy relation: $C^{\alpha}(|\psi\rangle_{AB|C_{1}\cdots B_{N-2}})=(2T(\rho_{AB}))^{\frac{\alpha}{2}}\geq |2T(\rho_{A})-2T(\rho_{B})|^{\frac{\alpha}{2}}
    =| C^{2}(|\psi\rangle_{A|BC_{1}\cdots C_{N-2}})- C^{2}(|\psi\rangle_{B|AC_{1}\cdots C_{N-2}})|^{\frac{\alpha}{2}}\geq \mathrm{optimal\ lower \ bound}?$
\item Tighter generalized polygamy relation: $C^{\alpha}(|\psi\rangle_{AB|C_{1}\cdots C_{N-2}})=(2T(\rho_{AB}))^{\frac{\alpha}{2}}\leq (2T(\rho_{A})+2T(\rho_{B}))^{\frac{\alpha}{2}}
    =| C^{2}(|\psi\rangle_{A|BC_{1}\cdots C_{N-2}})+ C^{2}(|\psi\rangle_{B|AC_{1}\cdots C_{N-2}})|^{\frac{\alpha}{2}}\leq \mathrm{optimal \ upper\  bound}?$
\end{enumerate}

\bigskip
 To approach these problems, 
 we consider how to bound the binomial function $(1+t)^x$ over an interval. There are several useful inequalities in the literature corresponding to various monogamy and polygamy relations, our idea is to formulate some
 parametrized strong inequality, which behaves like a master
 inequality that includes many previously known ones as special examples and provides optimal monogamy and polygamy relations than some of the recently known ones.

\subsection{\textbf{A key inequality}}
We start by giving the following inequality.
\begin{lem}\label{l:chap2-lem1}
For a given $p\in (0, 1]$, let $0\leq t \leq p$ and $0\leq x \leq 1$, then
\begin{equation}\label{e:chap2-ineq1}
\begin{aligned}
(1+t)^{x}\leq 1+\frac{(1+p)^{x}-1}{p^{x}}t^{x}+\frac{xp}{(1+p)^{2}}\left(\frac{t}{p}-\left(\frac{t}{p}\right)^{x}\right)=\Omega+\Upsilon t^{x},
\end{aligned}
\end{equation}
where $\Omega=1+\frac{xt}{(1+p)^{2}}$ and $\Upsilon=\frac{(1+p)^{x}-1}{p^{x}}-\frac{xp}{(1+p)^{2}p^{x}}$.
\end{lem}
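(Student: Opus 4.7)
The plan is to recast the inequality so that both sides agree at the endpoints $t=0$ and $t=p$, and then reduce the statement to the monotonicity of a single ratio on $(0,p]$.

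I would first move the terms of the right-hand side not involving $t^x$ to the left. Defining $g(t) := (1+t)^x - 1 - \frac{xt}{(1+p)^2}$, the claim is equivalent to $g(t) \leq \Upsilon\, t^x$. A direct substitution gives $g(0)=0$ and $g(p) = \Upsilon\, p^x$, so the inequality is tight at both endpoints. It therefore suffices to prove that the function $t \mapsto g(t)/t^x$ is non-decreasing on $(0,p]$, for then its supremum equals $g(p)/p^x = \Upsilon$, which is exactly what the inequality asserts.

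For the monotonicity, the sign of $\frac{d}{dt}\!\left[g(t)/t^x\right]$ coincides with that of $t\,g'(t) - x\,g(t)$. Using the identity $xt(1+t)^{x-1} - x(1+t)^x = -x(1+t)^{x-1}$, a short calculation reduces this quantity to $x\,h(t)$, where
\begin{equation*}
h(t) := 1 - (1+t)^{x-1} - \frac{(1-x)t}{(1+p)^2}.
\end{equation*}
Since $h(0)=0$, it is enough to show $h'(t) \geq 0$ on $[0,p]$. Differentiating yields $h'(t) = (1-x)\bigl[(1+t)^{x-2} - (1+p)^{-2}\bigr]$, and the chain $(1+t)^{2-x} \leq (1+p)^{2-x} \leq (1+p)^{2}$, valid because $t\leq p$, $2-x\leq 2$, and $1+p\geq 1$, makes the bracket non-negative.

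The step I expect to demand the most care is the algebraic reduction of $t\,g'(t) - x\,g(t)$ to $x\,h(t)$: the contribution $xt(1+t)^{x-1}$ from $t\,g'(t)$ must combine with $-x(1+t)^x$ from $-x\,g(t)$ to collapse into the single term $-x(1+t)^{x-1}$, and the remaining linear-in-$t$ pieces must leave precisely the coefficient $-(1-x)/(1+p)^{2}$. The rest of the proof is then a one-line derivative check; the structural insight that makes everything transparent is the observation that $g$ has the \emph{same} value at the two endpoints as the candidate upper bound, which forces the correct comparison quantity to be $g(t)/t^x$ rather than $g(t)$ itself.
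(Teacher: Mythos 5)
Your proof is correct and is essentially the paper's own argument: your ratio $g(t)/t^{x}$ is precisely the function $f(x,y)$ that the paper differentiates in $y$, and your auxiliary $h$ is the negative of the paper's $h(x,y)$, ending with the same derivative check $(1-x)\bigl[(1+t)^{x-2}-(1+p)^{-2}\bigr]\ge 0$. The endpoint-matching observation is a nice way to motivate the choice of comparison function, but the underlying computation is identical to the paper's.
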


\bigskip
\begin{proof}
The inequality holds when $t=0 ~ (x\neq 0)$. 
Now for fixed positive $p \ (\leq1)$, consider the following function:
$$f(x,y)=\frac{(1+y)^{x}-\frac{xy}{(1+p)^{2}}-1}{y^{x}}$$
defined on $[0,1]\times(0,p].$ Then $\frac{\partial f(x,y)}{\partial y}=\frac{xy^{x-1}\left[-(1+y)^{x-1}+\frac{(x-1)y}{(1+p)^{2}}+1\right]}{y^{2x}}.$ 

Let $h(x,y)=(1+y)^{x-1}-\frac{(x-1)y}{(1+p)^{2}}-1$, $(0\leq y \leq p, 0\leq x\leq 1)$, then 
$\frac{\partial h(x,y)}{\partial y}=(x-1)\left[(1+y)^{x-2}-\frac{1}{(1+p)^{2}}\right]\leq 0$, This means that $h(x,y)$ is decreasing as a function of $y$. Subsequently, we have $h(x,y)\leq h(x,0)=0.$ Therefore $\frac{\partial f(x,y)}{\partial y}\geq 0$, which immeditately implies that $f(x,y)$ is increasing as a function of $y$. Thus, for $0<t\leq p$, we have $f(x,t)\leq f(x,p)$, which is \eqref{e:chap2-ineq1}.
\end{proof}

\bigskip

In fact, the relation given in \cite[Eq. (12)]{SXZWF} is just a special case of our  \eqref{e:chap2-ineq1} for $p=1$.
 Note that $\frac{t}{p}-\left(\frac{t}{p}\right)^{x}\leq 0$ when $0<t\leq p\leq 1$ and $0\leq x \leq 1$, therefore
 \begin{equation}\label{e:chap2-ineq2}
\begin{aligned}
(1+t)^{x} &\leq 1+\frac{(1+p)^{x}-1}{p^{x}}t^{x}+\frac{xp}{(1+p)^{2}}\left(\frac{t}{p}-\left(\frac{t}{p}\right)^{x}\right)\\
&\leq 1+\frac{(1+p)^{x}-1}{p^{x}}t^{x}\leq 1+(2^{x}-1)t^{x}\\
&\leq 1+xt^{x}\leq 1+t^{x}.
\end{aligned}
\end{equation}
Note that the three inequalities above correspond exactly to those considered in \cite{SXZWF, YCLZ,JF,JFL}. 
Therefore any polygamy relations based on Lemma \ref{l:chap2-lem1} will produce tighter bounds than those given in \cite{SXZWF} as well as \cite{YCLZ,JF,JFL}.

\bigskip
\section{\textbf{Tighter $\beta$th power of assisted entanglement measures}} \label{s:poly}
Let $\rho=\rho_{AB_{1}\cdots B_{N-1}}$ be a quantum state over the Hilbert space $\mathcal{H}_{A}\bigotimes $ $ \mathcal{H}_{B_{1}} $ $
\bigotimes $ $ \cdots$ $ \bigotimes \mathcal{H}_{B_{N-1}}$. From now on, if the state $\rho$ is clear from the context, we will abbreviate the concurrence and concurrence assisted by $C(\rho_{AB_{i}})=C_{AB_{i}}$, $C_{a}(\rho_{AB_{i}})=C_{aAB_{i}}$ (cf. \eqref{e:Mixed}).

\subsection{\textbf{Parameterized polygamy relation}}\label{sub:chap3.1}
Consider an $N$-partite pure state $|{\psi\rangle}_{AB_1\cdots B_{N-1}}$, the concurrence with respect to the partition $A|B_1\cdots B_{N-1}$ obeys
the polygamy relation \cite{GBS}:
\begin{equation}\label{e:mono-N}
C^{2}(|\psi\rangle_{A|B_{1}\cdots B_{N-1}})\leq \sum_{i=1}^{N} C^{2}_{a}(\rho_{AB_{i}}).
\end{equation}
One can group the summands into $k$ terms by
partitioning the indices $\{1, 2, \ldots, N-1\}$ into $k$ subsets.
Namely we rewrite \eqref{e:mono-N} as
\begin{equation}\label{e:chap3.1-ineq1}
C^{2}(|\psi\rangle_{A|B_{1}\cdots B_{N-1}})\leq \sum_{i=1}^{k} C^{2}_{a}(\rho_{AN_{i}}).
\end{equation}
where $C^{2}_{a}(\rho_{AN_{i}})=\sum_{j=N_{i-1}+1}^{N_{i-1}+N_{i}} C^{2}_{a}(\rho_{AB_{j}})>0$ with $N_{0}=0$ and $\sum_{i=1}^{k}N_{i}=N-1$, $1\leq k \leq N-1.$

\bigskip
Using Lemma \ref{l:chap2-lem1}, we obtain the following results.

\begin{thm}\label{t:chap3.1-thm1} Let $|\psi\rangle_{AB_{1}\cdots B_{N-1}}$ be any $N$-qubit pure state and $C_{a}$ be bipartite assisted quantum measure CoA satisfying the polygamy relation \eqref{e:chap3.1-ineq1}. If there exist $0<p_{i}\leq 1$ such that $p_{i}C^{2}_{a}(\rho_{AN_{i}})\geq \sum_{j=i+1}^{k} C^{2}_{a}(\rho_{AN_{j}})$ for each $i=1,2,\cdots,k-1$, then
\begin{equation}\label{e:chap3.1-ineq2}
C^{\beta}(|\psi\rangle_{A|B_{1}\cdots B_{N-1}})\leq \sum_{l=1}^{k-1}\prod_{u=0}^{l-1}\Upsilon_{u}\Omega_{l}C_{a}^{\beta}(\rho_{AN_{l}})+\prod_{u=1}^{k-1}\Upsilon_{u}C_{a}^{\beta}(\rho_{AN_{k}}),
\end{equation}
for all $0\leq \beta \leq 2$ and $N\geq 3$, where $\Omega_{l}=1+\frac{\beta}{2(1+p_{l})^{2}}\frac{\sum_{j=l+1}^{k}C_{a}^{2}(\rho_{AN_{j}})}{C_{a}^{2}(\rho_{AN_{l}})}, l=1,2,\cdots, k-1;$
$\Upsilon_{u}=\frac{(1+p_{u})^{\frac{\beta}{2}}-1}{p_{u}^{\frac{\beta}{2}}}-\frac{\beta p_{u}}{2p_{u}^{\frac{\beta}{2}}(1+p_{u})^{2}},u=1,2,\cdots, k-1$ and $\Upsilon_{0}=1$.
\end{thm}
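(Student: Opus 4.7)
The plan is to obtain \eqref{e:chap3.1-ineq2} by iterated application of Lemma \ref{l:chap2-lem1} to the polygamy relation \eqref{e:chap3.1-ineq1}. I would first raise both sides of \eqref{e:chap3.1-ineq1} to the power $\beta/2\in[0,1]$; since $y\mapsto y^{\beta/2}$ is monotone nondecreasing on $[0,\infty)$, this yields
\begin{equation*}
C^{\beta}(|\psi\rangle_{A|B_{1}\cdots B_{N-1}})\leq\left(\sum_{i=1}^{k}C_{a}^{2}(\rho_{AN_{i}})\right)^{\beta/2}.
\end{equation*}
Factoring out $C_{a}^{2}(\rho_{AN_{1}})$ from the bracket rewrites the right side as $C_{a}^{\beta}(\rho_{AN_{1}})(1+t_{1})^{\beta/2}$, where $t_{1}=\sum_{j=2}^{k}C_{a}^{2}(\rho_{AN_{j}})/C_{a}^{2}(\rho_{AN_{1}})$. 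The hypothesis $p_{1}C_{a}^{2}(\rho_{AN_{1}})\geq\sum_{j=2}^{k}C_{a}^{2}(\rho_{AN_{j}})$ gives precisely $0\leq t_{1}\leq p_{1}\leq 1$, so Lemma \ref{l:chap2-lem1} applies with $x=\beta/2$ and $p=p_{1}$; unpacking $\Omega$ and $\Upsilon$ from the lemma and matching them against the definitions of $\Omega_{1}$ and $\Upsilon_{1}$ produces
\begin{equation*}
C^{\beta}(|\psi\rangle_{A|B_{1}\cdots B_{N-1}})\leq\Omega_{1}C_{a}^{\beta}(\rho_{AN_{1}})+\Upsilon_{1}\left(\sum_{j=2}^{k}C_{a}^{2}(\rho_{AN_{j}})\right)^{\beta/2}.
\end{equation*}

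The residual term now has exactly the same form as the quantity I just bounded, but with $k$ replaced by $k-1$ and indices shifted. I would iterate: at round $l$, I apply Lemma \ref{l:chap2-lem1} with $p=p_{l}$ to $(\sum_{j=l}^{k}C_{a}^{2}(\rho_{AN_{j}}))^{\beta/2}$, using the corresponding hypothesis $p_{l}C_{a}^{2}(\rho_{AN_{l}})\geq\sum_{j=l+1}^{k}C_{a}^{2}(\rho_{AN_{j}})$ to certify $0\leq t_{l}\leq p_{l}\leq 1$. Each round peels off a contribution $\Omega_{l}C_{a}^{\beta}(\rho_{AN_{l}})$ multiplied by the accumulated product $\Upsilon_{1}\Upsilon_{2}\cdots\Upsilon_{l-1}$ carried forward from the preceding rounds, and leaves a strictly shorter residual sum. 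After $k-1$ rounds the last residual is $\prod_{u=1}^{k-1}\Upsilon_{u}\cdot C_{a}^{\beta}(\rho_{AN_{k}})$, and collecting the peeled contributions reproduces \eqref{e:chap3.1-ineq2} term by term. The convention $\Upsilon_{0}=1$ is there solely to absorb the $l=1$ summand $\Omega_{1}C_{a}^{\beta}(\rho_{AN_{1}})$ into the uniform expression $\prod_{u=0}^{l-1}\Upsilon_{u}\,\Omega_{l}C_{a}^{\beta}(\rho_{AN_{l}})$.

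The argument is really just two ingredients (the key lemma plus a geometric bookkeeping of indices), so the main obstacle is formalizing the iteration cleanly rather than any analytic difficulty; I would cast it as induction on $k$. The base case $k=2$ is a direct invocation of Lemma \ref{l:chap2-lem1}, and for the inductive step I apply Lemma \ref{l:chap2-lem1} once to the outermost $(1+t_{1})^{\beta/2}$ and then invoke the inductive hypothesis on the $(k-1)$-block residual $\sum_{j=2}^{k}C_{a}^{2}(\rho_{AN_{j}})$ with parameters $p_{2},\dots,p_{k-1}$. The one subtle verification is that, after pulling out $\Upsilon_{1}$, the shifted coefficients match the formula in \eqref{e:chap3.1-ineq2} with the correct index ranges for both the product $\prod_{u=0}^{l-1}\Upsilon_{u}$ and the ratio defining $\Omega_{l}$; this is a matter of careful index accounting rather than a new idea.
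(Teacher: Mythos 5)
Your proposal is correct and follows essentially the same route as the paper: both raise the polygamy relation \eqref{e:chap3.1-ineq1} to the power $\beta/2$, factor out the leading term, apply Lemma \ref{l:chap2-lem1} with $x=\beta/2$, $p=p_l$ and $t=t_l$ (the hypothesis guaranteeing $0\leq t_l\leq p_l$), and iterate over the shrinking residual sum. Your suggestion to package the iteration as induction on $k$ is just a cleaner formalization of the paper's ``iterating \eqref{e:iter}'' step, not a different argument.
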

\begin{proof} For each $u=1, \ldots, k-2$ it follows from Lemma \ref{l:chap2-lem1} that
\begin{align} \notag
&\left(\sum_{i=u}^{k} C^{2}_{a}(\rho_{AN_{i}})\right)^{\frac{\beta}{2}}=C^{\beta}_{a}(\rho_{AN_{u}})\left(1+\frac{\sum_{i=u+1}^{k} C^{2}_{a}(\rho_{AN_{i}})}{C^{\beta}_{a}(\rho_{AN_{u}})}\right)^{\frac{\beta}{2}}\\ \label{e:iter}
&\leq \Omega_{u}C^{\beta}_{a}(\rho_{AN_{u}})+\Upsilon_u\left(\sum_{i=u+1}^{k} C^{2}_{a}(\rho_{AN_{i}})\right)^{\frac{\beta}{2}}
\end{align}

Using \eqref{e:chap3.1-ineq1} one has that
\begin{align*}
&C^{\beta}(|\psi\rangle_{A|B_{1}\cdots B_{N-1}})\leq 
\left(\sum_{i=1}^{k} C^{2}_{a}(\rho_{AN_{i}})\right)^{\frac{\beta}{2}}\\
&\leq\Upsilon_{1}\left(\sum_{i=2}^{k} C^{2}_{a}(\rho_{AN_{i}})\right)^{\frac{\beta}{2}}+\Omega_{1}C^{\beta}_{a}(\rho_{AN_{1}})~~~(\text{by \eqref{e:iter}})\\
&\leq\Upsilon_{1}\left(\Upsilon_{2}\left(\sum_{i=3}^{k} C^{2}_{a}(\rho_{AN_{i}})\right)^{\frac{\beta}{2}}+\Omega_{2}C^{\beta}_{a}(\rho_{AN_{2}})\right)+\Omega_{1}C^{\beta}_{a}(\rho_{AN_{1}})\\
&\leq\cdots\leq\prod_{u=1}^{k-1}\Upsilon_{u}C_{a}^{\beta}(\rho_{AN_{k}})+\sum_{l=1}^{k-1}\prod_{u=0}^{l-1}\Upsilon_{u}\Omega_{l}C_{a}^{\beta}(\rho_{AN_{l}}) \quad \quad (\text{iterating \eqref{e:iter}}).
\end{align*}
\end{proof}
In particular, if $k=N-1,$ then we have

\begin{equation}\label{e:chap3.1-ineq3}
C^{\beta}(|\psi\rangle_{A|B_{1}\cdots B_{N-1}})\leq \sum_{l=1}^{N-2}\prod_{u=0}^{l-1}\Upsilon_{u}^{'}\Omega_{l}^{'}C_{a}^{\beta}(\rho_{AB_{l}})+\prod_{u=1}^{N-2}\Upsilon_{u}^{'}C_{a}^{\beta}(\rho_{AB_{N-1}}),
\end{equation}
for all $0\leq \beta \leq 2$ and $N\geq 3$, where $\Omega_{l}^{'}=1+\frac{\beta}{2(1+p_{l})^{2}}\frac{\sum_{j=l+1}^{N-1}C_{a}^{2}(\rho_{AB_{j}})}{C_{a}^{2}(\rho_{AB_{l}})}, l=1,2,\cdots, N-2;$
$\Upsilon_{u}^{'}=\frac{(1+p_{u})^{\frac{\beta}{2}}-1}{p_{u}^{\frac{\beta}{2}}}-\frac{\beta p_{u}}{2p_{u}^{\frac{\beta}{2}}(1+p_{u})^{2}},u=1,2,\cdots, N-2$ and $\Upsilon_{0}^{'}=1$.

\subsection{\textbf{Comparison of parameterized polygamy relations of CoA}}
 Based on Theorem \ref{t:chap3.1-thm1} and \eqref{e:chap2-ineq2}, we have the following polygamy relations of $\beta$th $(0\leq \beta \leq 2)$ power of CoA.
\begin{equation}\label{e:chap3.1-ineq4}
\begin{aligned}
C^{\beta}(|\psi\rangle_{A|B_{1}\cdots B_{N-1}})&\leq \sum_{l=1}^{k-1}\prod_{u=0}^{l-1}\Upsilon_{u}\Omega_{l}C_{a}^{\beta}(\rho_{AN_{l}})+\prod_{u=1}^{k-1}\Upsilon_{u}C_{a}^{\beta}(\rho_{AN_{k}}) \quad (\text{by  \ref{e:chap3.1-ineq2}})\\
&\leq \sum_{l=1}^{k}\prod_{u=0}^{l-1}\Phi_{u}C_{a}^{\beta}(\rho_{AN_{l}}) ~~~~~ (\text{ by $(1+t)^{x}\leq1+\frac{(1+p)^{x}-1}{p^{x}}t^{x}$ in  \eqref{e:chap2-ineq2}})\\
&\leq \sum_{l=1}^{k}(2^{\frac{\beta}{2}}-1)^{l-1}C_{a}^{\beta}(\rho_{AN_{l}})  \quad \quad(\text{the inequality in \cite{JFL}})\\
&\leq \sum_{l=1}^{k}(\frac{\beta}{2})^{l-1}C_{a}^{\beta}(\rho_{AN_{l}}) \quad \quad (\text{the inequality in \cite{JF}})\\
&\leq \sum_{l=1}^{k}C_{a}^{\beta}(\rho_{AN_{l}})  \quad \quad(\text{by $(1+t)^{x}\leq 1+t^{x}$ in  \eqref{e:chap2-ineq2}})
\end{aligned}
\end{equation}
where $\Phi_{u}=((1+p_{i})^{\frac{\beta}{2}}-1){p_{i}^{-\frac{\beta}{2}}}$ and $0<p_{i}\leq 1$, $i=1,2,\cdots,k-1$, $1\leq k \leq N-1.$

\bigskip
 In fact, the polygamy relation given in \cite[Thm. 2]{SXZWF} is a special case of our Theorem \ref{t:chap3.1-thm1}
 with all $p_{i}=1$, so our bound is tighter due to \eqref{e:chap3.1-ineq4}. Since the bound of \cite{SXZWF} is tighter than those of \cite{YCLZ,JF,JFL}, our
 parameterized $(0<p_{i}\leq 1$, $i=1,2,\cdots,k-1)$ polygamy relations based on Lemma \ref{l:chap2-lem1} is
 tighter than those given in \cite{YCLZ,JF,JFL} as well. 

\medskip
Now we use a specific example to show how our bounds for polygamy relations outperform these other studies (see Fig 1).
\begin{exmp}
Let $\rho=|\psi\rangle_{ABC}\langle\psi|$ be the three-qubit state \cite{AACJLT}:
$$
|\psi\rangle_{ABC}=\lambda_{0}|000\rangle+\lambda_{1}e^{i \varphi}|100\rangle+\lambda_{2}|101\rangle+\lambda_{3}|110\rangle+\lambda_{4}|111\rangle.
$$
where $\sum_{i=0}^4 \lambda_i^2=1$, and $\lambda_{i}\geq 0$ for $i=0,1,2,3,4$.
Then  $C(|\psi\rangle_{A|BC})=2 \lambda_0 \sqrt{\lambda_2^2+\lambda_3^2+\lambda_4^2}$, $C_{aAB}=2 \lambda_0 \sqrt{\lambda^{2}_{2}+\lambda^{2}_{4}}$, and $C_{aAC}=2 \lambda_0 \sqrt{\lambda^{2}_{3}+\lambda^{2}_{4}}$.

Set $\lambda_0=\lambda_3=\frac{1}{2}, \lambda_1=\lambda_4=0, \lambda_2=\frac{\sqrt{2}}{2}$.
Then we have $C(|\psi\rangle_{A|BC})=\frac{\sqrt{3}}{2}, C_{aAB }=\frac{\sqrt{2}}{2}, C_{aAC}=\frac{1}{2}$. Thus $t=\frac{C_{aAC}^{2}}{C_{aAB}^{2}}=\frac{1}{2}$. Set $p = \frac{3}{5}$ (since $0<t\leq p \leq 1$).

By Theorem \ref{t:chap3.1-thm1}, for any $0\leq\beta\leq 2$, the RHS of the polygamy relation is
\begin{align*}
Z_1&=\left(1+\frac{\beta}{2(1+p)^{2}}\frac{C_{aAC}^{2}}{C_{aAB}^{2}}\right)C_{aAB}^{\beta}+\left[\frac{(1+p)^{\frac{\beta}{2}}-1}{p^{\frac{\beta}{2}}}-\frac{\beta p}{2p^{\frac{\beta}{2}}(1+p)^{2}}\right]C_{aAC}^{\beta}\\
&=\left(1+\frac{25}{256}\beta\right)\left(\frac{\sqrt{2}}{2}\right)^{\beta}+\left[\frac{\left(\frac{8}{5}\right)^{\frac{\beta}{2}}-1-\frac{15}{128}\beta}{\left(\frac{3}{5}\right)^{\frac{\beta}{2}}}\right]\left(\frac{1}{2}\right)^{\beta}.
\end{align*}

The RHS of the polygamy relation from \cite{SXZWF} is a special case of our bound at $p=1$:
\begin{align*}
    Z_2&=\left(1+\frac{\beta}{8}\frac{C_{aAC}^{2}}{C_{aAB}^{2}}\right)C_{aAB}^{\beta}+\left[2^{\frac{\beta}{2}}-1-\frac{\beta }{8}\right]C_{aAC}^{\beta}\\
&=\left(1+\frac{\beta}{16}\right)\left(\frac{\sqrt{2}}{2}\right)^{\beta}+\left[2^{\frac{\beta}{2}}-1-\frac{\beta }{8}\right]\left(\frac{1}{2}\right)^{\beta}.
\end{align*}

The RHS of the polygamy relation given in \cite{JF,JFL}
are respectively:
\begin{align*}
    Z_{_3}=\left(\frac{\sqrt{2}}{2}\right)^{\beta}+\left(2^{\frac{\beta}{2}}-1\right)\left(\frac{1}{2}\right)^{\beta},~~
    Z_{_4}= \left(\frac{\sqrt{2}}{2}\right)^{\beta}+\frac{\beta}{2}\left(\frac{1}{2}\right)^{\beta}.
\end{align*}

\begin{figure}[H] 
    \centering 
    \includegraphics[width=0.8\textwidth]{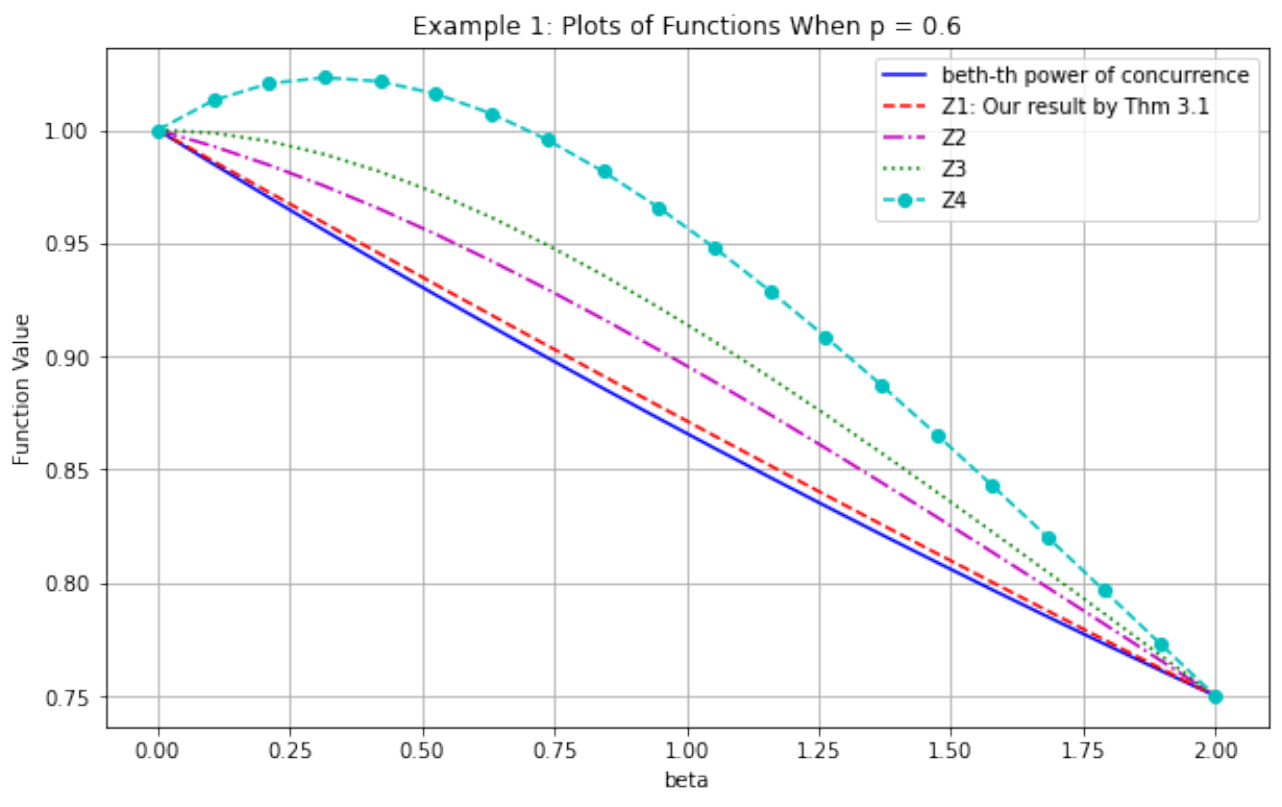} 
\end{figure}

Figure 1: The $x$-axis is $\beta$. From bottom to top, the solid blue line is the $\beta$th power of concurrence for $C(|\psi\rangle_{A|BC})$, the dashed red line $Z_1$ represents the upper bound from our result $(3.2)$ in Theorem \ref{t:chap3.1-thm1}, the dash-dot magenta line $Z_2$ represents the result in \cite{SXZWF}, the dotted green line $Z_3$ represents the result in \cite{JF}, and the dashed cyan line with circle markers $Z_4$ represents the result in \cite{JFL}. The graph shows our upper bound $Z_1$ is the strongest.
\end{exmp}

\section{\textbf{Optimized bounds}} \label{s:optimized bound}

Combining with Section \ref{s:poly}, we can obtain the optimized upper bounds of the polygamy relation for the $\beta$th $(0\leq \beta \leq 2)$ power of CoA. Based on this, we will then follow the conclusions of Section \ref{s:poly} to derive the optimized lower (resp. upper) bounds of the generalized monogamy (resp. polygamy) relation for the $\alpha$th $(0\leq \alpha \leq 2)$ power of concurrence about $N$-qubit pure state $|\psi\rangle_{ABC_{1}\cdots C_{N-2}}$ under the partition $AB$ and $C_{1}\cdots C_{N-2}$.

\subsection{\textbf{The optimized bounds of $C^{\alpha}(|\psi\rangle_{AB|C_{1}\cdots C_{N-2}})$ }} \label{sub:optimized bound-C}
It follows from \eqref{e:property}-\eqref{e:Pure-N}
that
 \begin{enumerate}[label= {\bf(\roman*)}]
\item The generalized monogamy: $C^{\alpha}(|\psi\rangle_{AB|C_{1}\cdots B_{N-2}})=(2T(\rho_{AB}))^{\frac{\alpha}{2}}\geq |2T(\rho_{A})-2T(\rho_{B})|^{\frac{\alpha}{2}}\\
    =| C^{2}(|\psi\rangle_{A|BC_{1}\cdots C_{N-2}})- C^{2}(|\psi\rangle_{B|AC_{1}\cdots C_{N-2}})|^{\frac{\alpha}{2}}$
\item The generalized polygamy: $C^{\alpha}(|\psi\rangle_{AB|C_{1}\cdots C_{N-2}})=(2T(\rho_{AB}))^{\frac{\alpha}{2}}\leq (2T(\rho_{A})+2T(\rho_{B}))^{\frac{\alpha}{2}}\\
    =| C^{2}(|\psi\rangle_{A|BC_{1}\cdots C_{N-2}})+ C^{2}(|\psi\rangle_{B|AC_{1}\cdots C_{N-2}})|^{\frac{\alpha}{2}}$
\end{enumerate}

\medskip
The following results are easy consequences of \eqref{e:chap3.1-ineq1} and \eqref{e:chap3.1-ineq2} with the help of the key lemma.
\begin{prop}\label{p:bounds}
For $0\leq \alpha \leq 2$ and any $N$-qubit state $|\psi\rangle_{AB|C_{1}\cdots C_{N-2}}\, (N\geq4)$, 
\begin{enumerate}[label={\bf Case \arabic*.}]
\item The lower bound of generalized monogamy for $C^{\alpha}(|\psi\rangle_{AB|C_{1}\cdots C_{N-2}})$:
\begin{equation}\label{e:chap4.1-ineq1}
\begin{aligned}
 C^{\alpha}(|\psi\rangle_{AB|C_{1}\cdots B_{N-2}})\geq \max \left\{\left(\sum_{i=1}^{N-2}C^{2}_{AC_{i}}+C^{2}_{AB}\right)^{\frac{\alpha}{2}}-\Theta_{B},\left(\sum_{i=1}^{N-2}C^{2}_{BC_{i}}+C^{2}_{AB}\right)^{\frac{\alpha}{2}}-\Theta_{A}\right\}
 \end{aligned}
 \end{equation}
\item The upper bound of generalized polygamy for $C^{\alpha}(|\psi\rangle_{AB|C_{1}\cdots C_{N-2}})$:
\begin{equation}\label{e:chap4.1-ineq2}
\begin{aligned}
C^{\alpha}(|\psi\rangle_{AB|C_{1}\cdots C_{N-2}})\leq \Theta_{A}+\Theta_{B}
   \end{aligned}
\end{equation}
 \end{enumerate}
 where  $$\Theta_{j}=\sum_{l=1}^{k_{j}-1}\prod_{u=0}^{l-1}\Upsilon_{u_{j}}\Omega_{l_{j}}C_{a}^{\alpha}(\rho_{jN_{l_{j}}})+\prod_{u=1}^{k_{j}-1}\Upsilon_{u_{j}}C_{a}^{\alpha}(\rho_{jN_{k_{j}}}), $$
with $\Omega_{l_{j}}=1+\frac{\alpha}{2(1+p_{l_{j}})^{2}}\frac{\sum_{i=l+1}^{k_{j}}C_{a}^{2}(\rho_{jN_{i_{j}}})}{C_{a}^{2}(\rho_{jN_{l_{j}}})}, l=1,2,\cdots, k_{j}-1;$ $\Upsilon_{0_{j}}=1$ and $\Upsilon_{u_{j}}=\frac{(1+p_{u_{j}})^{\frac{\alpha}{2}}-1}{p_{u_{j}}^{\frac{\alpha}{2}}}-\frac{\alpha p_{u_{j}}}{2p_{u_{j}}^{\frac{\alpha}{2}}(1+p_{u_{j}})^{2}}$, $u=1,2,\cdots, k_{j}-1$, $2\leq k_{j} \leq N-1,$  $j=A,B.$
\end{prop}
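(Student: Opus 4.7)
My plan is to reduce both bounds in the proposition to a combination of (a) the entropy-interval inequality \eqref{e:property}, which via \eqref{e:Pure-N} translates into
\[
\bigl|C^{2}(|\psi\rangle_{A|\cdots}) - C^{2}(|\psi\rangle_{B|\cdots})\bigr|^{\alpha/2} \le C^{\alpha}(|\psi\rangle_{AB|\cdots}) \le \bigl(C^{2}(|\psi\rangle_{A|\cdots}) + C^{2}(|\psi\rangle_{B|\cdots})\bigr)^{\alpha/2},
\]
(b) the parametrized polygamy relation of Theorem \ref{t:chap3.1-thm1} applied with $\beta = \alpha$ separately to the pure states $|\psi\rangle_{A|BC_{1}\cdots C_{N-2}}$ and $|\psi\rangle_{B|AC_{1}\cdots C_{N-2}}$, which produces exactly $\Theta_{A}$ and $\Theta_{B}$ as upper bounds, and (c) the classical concurrence monogamy $C^{2}(|\psi\rangle_{X|Y_{1}\cdots Y_{m}}) \ge \sum_{i} C^{2}(\rho_{XY_{i}})$. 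The bridge between the squared concurrences in (a) and the $\alpha$th powers in (b)--(c) is furnished by the two elementary power inequalities
\[
(x+y)^{r} \le x^{r} + y^{r}, \qquad (x - y)^{r} \ge x^{r} - y^{r} \quad (x \ge y \ge 0),
\]
valid for $r = \alpha/2 \in [0,1]$ and both consequences of the subadditivity identity $x^{r} = ((x-y)+y)^{r} \le (x-y)^{r} + y^{r}$.

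For Case 2 the argument is then immediate: starting from the right-hand side of (a), apply the first power inequality to split the outer $\alpha/2$-power, and cap each of $C^{\alpha}(|\psi\rangle_{A|\cdots})$ and $C^{\alpha}(|\psi\rangle_{B|\cdots})$ by $\Theta_{A}$ and $\Theta_{B}$ respectively via Theorem \ref{t:chap3.1-thm1}, yielding \eqref{e:chap4.1-ineq2}.

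For Case 1, I would first fix the subcase $C^{2}(|\psi\rangle_{A|\cdots}) \ge C^{2}(|\psi\rangle_{B|\cdots})$ so that the absolute value in (a) opens without sign ambiguity, and then apply the second power inequality to obtain
\[
C^{\alpha}(|\psi\rangle_{AB|\cdots}) \ge C^{\alpha}(|\psi\rangle_{A|\cdots}) - C^{\alpha}(|\psi\rangle_{B|\cdots}).
\]
The leading term is lower-bounded by raising the monogamy $C^{2}(|\psi\rangle_{A|BC_{1}\cdots C_{N-2}}) \ge C^{2}_{AB} + \sum_{i} C^{2}_{AC_{i}}$ to the $\alpha/2$-power (valid since $\alpha/2 \ge 0$), and the subtracted term is upper-bounded by $\Theta_{B}$. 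This gives the first candidate inside the maximum in \eqref{e:chap4.1-ineq1}; swapping the labels $A$ and $B$ in the opposite subcase produces the second candidate.

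Passing to the maximum is legitimate because in the first subcase the other candidate $(\sum_{i} C^{2}_{BC_{i}} + C^{2}_{AB})^{\alpha/2} - \Theta_{A}$ is nonpositive, since
\[
\bigl(\textstyle\sum_{i} C^{2}_{BC_{i}} + C^{2}_{AB}\bigr)^{\alpha/2} \le C^{\alpha}(|\psi\rangle_{B|\cdots}) \le C^{\alpha}(|\psi\rangle_{A|\cdots}) \le \Theta_{A},
\]
so it is trivially dominated by the nonnegative quantity $C^{\alpha}(|\psi\rangle_{AB|\cdots})$; the second subcase is symmetric. I expect the main obstacle to be not analytic but organisational: instantiating Theorem \ref{t:chap3.1-thm1} cleanly twice, once on a partition of $\{B,C_{1},\ldots,C_{N-2}\}$ into $k_{A}$ blocks and once on a partition of $\{A,C_{1},\ldots,C_{N-2}\}$ into $k_{B}$ blocks, with the associated parameters $p_{u_{j}}$, $\Upsilon_{u_{j}}$, $\Omega_{l_{j}}$ ($j=A,B$), so that the resulting $\Theta_{A}$ and $\Theta_{B}$ match the definitions in the statement verbatim.
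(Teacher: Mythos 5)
Your proposal is correct and follows essentially the same route as the paper: the linear-entropy bounds \eqref{e:property}--\eqref{e:Pure-N}, the power inequalities $(x\pm y)^{\alpha/2}$ versus $x^{\alpha/2}\pm y^{\alpha/2}$ (the Lemma of \cite{JF}), the squared-concurrence monogamy for the unsubtracted term, and Theorem \ref{t:chap3.1-thm1} with $\beta=\alpha$ to cap $C^{\alpha}(|\psi\rangle_{A|BC_{1}\cdots C_{N-2}})$ and $C^{\alpha}(|\psi\rangle_{B|AC_{1}\cdots C_{N-2}})$ by $\Theta_{A}$ and $\Theta_{B}$. Your explicit check that in each subcase the other candidate in the maximum is nonpositive is a small extra justification the paper leaves implicit, but it does not change the argument.
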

\begin{proof}
If $C^{2}(|\psi\rangle_{A|BC_{1}\cdots C_{N-2}}) \leq C^{2}(|\psi\rangle_{B|AC_{1}\cdots C_{N-2}})$, then
\begin{equation}\label{e:chap4.1-ineq3}
\begin{aligned}
& C^{\alpha}(|\psi\rangle_{AB|C_{1}\cdots B_{N-2}})\geq| C^{2}(|\psi\rangle_{A|BC_{1}\cdots C_{N-2}})- C^{2}(|\psi\rangle_{B|AC_{1}\cdots C_{N-2}})|^{\frac{\alpha}{2}} \\
     &\geq C^{\alpha}(|\psi\rangle_{B|AC_{1}\cdots C_{N-2}})- C^{\alpha}(|\psi\rangle_{A|BC_{1}\cdots C_{N-2}})\quad \quad(\text{by \cite[Lemma]{JF}})\\
     &\geq \left(\sum_{i=1}^{N-2}C^{2}_{BC_{i}}+C^{2}_{AB}\right)^{\frac{\alpha}{2}}- C^{\alpha}(|\psi\rangle_{A|BC_{1}\cdots C_{N-2}})\quad \quad(
     \text{by \eqref{e:mono-N}})
   \end{aligned}
 \end{equation}
For simplicity, denote  $|\psi\rangle_{A|D_{1}D_{2}\cdots D_{N-1}}=|\psi\rangle_{A|BC_{1}\cdots C_{N-2}},$ where $D_{1}=B,D_{i}=C_{i-1},i=2,\cdots,N-1.$ Thus, it can be directly deduced from  \eqref{e:chap3.1-ineq1} and Theorem \ref{t:chap3.1-thm1} that there exists $0<p_{i_{A}}\leq 1$ such that $p_{i_{A}}C^{2}_{a}(\rho_{AN_{i_{A}}})\geq \sum_{j_{A}=i_{A}+1}^{k_{A}} C^{2}_{a}(\rho_{AN_{j_{A}}})$ for $i=1,2,\cdots,k_{A}-1$, then
\begin{equation}\label{e:chap4.1-ineq4}
C^{\alpha}(|\psi\rangle_{A|BC_{1}\cdots C_{N-2}})\leq \sum_{l=1}^{k_{A}-1}\prod_{u=0}^{l-1}\Upsilon_{u_{A}}\Omega_{l_{A}}C_{a}^{\alpha}(\rho_{AN_{l_{A}}})+\prod_{u=1}^{k_{A}-1}\Upsilon_{u_{A}}C_{a}^{\alpha}(\rho_{AN_{k_{A}}}),
\end{equation}
for all $0\leq \alpha \leq 2$, where $\Omega_{l_{A}}=1+\frac{\alpha}{2(1+p_{l_{A}})^{2}}\frac{\sum_{i=l+1}^{k_{A}}C_{a}^{2}(\rho_{AN_{i_{A}}})}{C_{a}^{2}(\rho_{AN_{l_{A}}})}, l=1,2,\cdots, k_{A}-1;$ $\Upsilon_{0_{A}}=1$ and $\Upsilon_{u_{A}}=\frac{(1+p_{u_{A}})^{\frac{\alpha}{2}}-1}{p_{u_{A}}^{\frac{\alpha}{2}}}-\frac{\alpha p_{u_{A}}}{2p_{u_{A}}^{\frac{\alpha}{2}}(1+p_{u_{A}})^{2}},u=1,2,\cdots, k_{A}-1$, $2\leq k_{A} \leq N-1.$

 \bigskip
Similarly, if $C^{2}(|\psi\rangle_{B|AC_{1}\cdots C_{N-2}}) \leq C^{2}(|\psi\rangle_{A|BC_{1}\cdots C_{N-2}})$, we then have
\begin{equation}\label{e:chap4.1-ineq5}
C^{\alpha}(|\psi\rangle_{B|AC_{1}\cdots C_{N-2}})\leq \sum_{l=1}^{k_{B}-1}\prod_{u=0}^{l-1}\Upsilon_{u_{B}}\Omega_{l_{B}}C_{a}^{\alpha}(\rho_{BN_{l_{B}}})+\prod_{u=1}^{k_{B}-1}\Upsilon_{u_{B}}C_{a}^{\alpha}(\rho_{BN_{k_{B}}}),
\end{equation}
for all $0\leq \alpha \leq 2$, where $\Omega_{l_{B}}=1+\frac{\alpha}{2(1+p_{l_{B}})^{2}}\frac{\sum_{i=l+1}^{k_{B}}C_{a}^{2}(\rho_{BN_{i_{B}}})}{C_{a}^{2}(\rho_{BN_{l_{B}}})}, l=1,2,\cdots, k_{B}-1;$ $\Upsilon_{0_{B}}=1$ and $\Upsilon_{u_{B}}=\frac{(1+p_{u_{B}})^{\frac{\alpha}{2}}-1}{p_{u_{B}}^{\frac{\alpha}{2}}}-\frac{\alpha p_{u_{B}}}{2p_{u_{B}}^{\frac{\alpha}{2}}(1+p_{u_{B}})^{2}},u=1,2,\cdots, k_{B}-1$, $2\leq k_{B} \leq N-1.$

Plugging \eqref{e:chap4.1-ineq4} or \eqref{e:chap4.1-ineq5} into \eqref{e:chap4.1-ineq3} respectively, we obtain \eqref{e:chap4.1-ineq1}.
Meanwhile,
\begin{equation}\label{e:chap4.1-ineq6}
\begin{aligned}
&C^{\alpha}(|\psi\rangle_{AB|C_{1}\cdots C_{N-2}}) 
\leq | C^{2}(|\psi\rangle_{A|BC_{1}\cdots C_{N-2}})+ C^{2}(|\psi\rangle_{B|AC_{1}\cdots C_{N-2}})|^{\frac{\alpha}{2}}\\
   &\leq C^{\alpha}(|\psi\rangle_{A|BC_{1}\cdots C_{N-2}})+ C^{\alpha}(|\psi\rangle_{B|AC_{1}\cdots C_{N-2}})\quad \quad(\text{ by Lemma in \cite{JF}})\\
  & \leq \Theta_{A}+\Theta_{B}\quad \quad\quad\quad\quad(\text{by  \eqref{e:chap4.1-ineq4} and  \eqref{e:chap4.1-ineq5}})
   \end{aligned}
\end{equation}

\end{proof}

Based on the discussion, the following 
are now clear.
\begin{enumerate}[label= {\bf(\roman*)}]
\item The lower bound of generalized monogamy for $C^{\alpha}(|\psi\rangle_{AB|C_{1}\cdots C_{N-2}})$ given in \cite[Thm. 3]{SXZWF} is a special case of our  \eqref{e:chap4.1-ineq1} in which all $p_{i_{j}}=1,i=1,2,\cdots,k_{j}-1,j=A,B.$ Combining with  \eqref{e:chap3.1-ineq4}, we claim that  \eqref{e:chap4.1-ineq1} is the optimized lower bound of generalized monogamy relation than \cite[Thm. 3]{JF}, \cite[Thm. 2]{JFL}.
\item The upper bound of generalized polygamy for $C^{\alpha}(|\psi\rangle_{AB|C_{1}\cdots C_{N-2}})$ given in \cite[Thm. 4]{SXZWF} is a special case of our  \eqref{e:chap4.1-ineq2} in which all $p_{i_{j}}=1,i=1,2,\cdots,k_{j}-1,j=A,B.$ Combining with  \eqref{e:chap3.1-ineq4}, we claim that  \eqref{e:chap4.1-ineq2} is the optimized upper bound of generalized polygamy relation than \cite[Thm. 4]{JF}, \cite[Thm. 3]{JFL}.
\end{enumerate}

\bigskip
 We provide an example to illustrate the comparison (see Figs 2 and 3).
\begin{exmp}
Consider the following $4$-qubit generalized $W$-class state \cite{YCLZ}:
$$
|W\rangle_{ABC_{1}C_{2}}=\lambda_{1}(|1000\rangle+\lambda_{2}|0100\rangle)+\lambda_{3}|0010\rangle+\lambda_{4}|0001\rangle.
$$
 where $\sum_{i=1}^4 \lambda_i^2=1$, and $\lambda_{i}\geq 0$ for $i=1,2,3,4$.
Then  $C(|W\rangle_{AB|C_{1}C_{2}})=2  \sqrt{(\lambda_1^2+\lambda_2^2)(\lambda_3^2+\lambda_4^2)}$, $C_{AB}=C_{aAB}=2 \lambda_1\lambda_2 $, $C_{AC_{1}}=C_{aAC_{1}}=2 \lambda_1\lambda_3 $, $C_{AC_{2}}=C_{aAC_{2}}=2 \lambda_1\lambda_4 $, $C_{BC_{1}}=C_{aBC_{1}}=2 \lambda_2\lambda_3 $, and $C_{BC_{2}}=C_{aBC_{2}}=2 \lambda_2\lambda_4 $.

\bigskip
\begin{enumerate}[label={\bf Case \arabic*.}]
\item (Monogamy, lower bounds) 
Set $\lambda_1=\frac{3}{4}, \lambda_2=\frac{1}{2}, \lambda_3=\frac{\sqrt{2}}{4}, \lambda_4=\frac{1}{4}$ (which satisfy the hypothesis of Thm. \ref{t:chap3.1-thm1}). By \eqref{e:chap4.1-ineq1}, our lower bound is
$
T_{1}=\left(\sum_{i=1}^{2}C^{2}_{AC_{i}}+C^{2}_{AB}\right)^{\frac{\alpha}{2}}-\Theta_{B}.
 $
 Using Theorem \ref{t:chap3.1-thm1} and letting $k_{B}=3$, then $t_{1_{B}}=\frac{C^{2}_{aBC_{1}}+C^{2}_{aBC_{2}}}{C^{2}_{aAB}}=\frac{1}{3}$, $t_{2_{B}}=\frac{C^{2}_{aBC_{2}}}{C^{2}_{aBC_{1}}}=\frac{1}{2}$. 
Thus by Lemma \ref{l:chap2-lem1}, set $p_{1_{B}}=\frac{2}{5}(>\frac{1}{3})$, $p_{2_{B}}=\frac{3}{5}(>\frac{1}{2})$. 
With these data, our lower bound is
\begin{align*}
T_{1}=\left(C^{2}_{AB}+C^{2}_{AC_{1}}+C^{2}_{AC_{2}}\right)^{\frac{\alpha}{2}}-\Theta_{B}=\left(\frac{63}{64}\right)^{\frac{\alpha}{2}}-\Theta_{B},
 \end{align*}
where by Prop. \ref{p:bounds}
\begin{equation}\label{e:chap4.1-ineq7}
\begin{aligned}
\Theta_{B}
&=\left(1+\frac{25}{294}\alpha\right)\left(\frac{3}{4}\right)^{\alpha}+\left(\frac{\left(\frac{7}{5}\right)^{\frac{\alpha}{2}}-1-\frac{5}{49}\alpha}{\left(\frac{2}{5}\right)^{\frac{\alpha}{2}}}\right)\left(1+\frac{25}{256}\alpha\right)\left(\frac{\sqrt{2}}{4}\right)^{\alpha}\\
&+\left(\frac{\left(\frac{7}{5}\right)^{\frac{\alpha}{2}}-1-\frac{5}{49}\alpha}{\left(\frac{2}{5}\right)^{\frac{\alpha}{2}}}\right)\left(\frac{\left(\frac{8}{5}\right)^{\frac{\alpha}{2}}-1-\frac{15}{128}\alpha}{\left(\frac{3}{5}\right)^{\frac{\alpha}{2}}}\right)\left(\frac{1}{4}\right)^{\alpha}.
 \end{aligned}
\end{equation}

 The lower bound $T_2$ in \cite[Thm. 3]{SXZWF} is a special case of our bound at $p_{1_{B}}=p_{2_{B}}=1$, 
\begin{align}\notag
T_{2}&=\left(\frac{63}{64}\right)^{\frac{\alpha}{2}}-\Gamma_B
=\left(\frac{63}{64}\right)^{\frac{\alpha}{2}}
-\left(1+\frac{1}{24}\alpha\right)\left(\frac{3}{4}\right)^{\alpha}\\ \label{e:chap4.1-ineq8}
&-\left(2^{\frac{\alpha}{2}}-1-\frac{1}{8}\alpha\right)\left(1+\frac{1}{16}\alpha\right)\left(\frac{\sqrt{2}}{4}\right)^{\alpha}
-\left(2^{\frac{\alpha}{2}}-1-\frac{1}{8}\alpha\right)^{2}\left(\frac{1}{4}\right)^{\alpha}
\end{align}

The lower bound given in \cite[Thm. 3]{JF} is
\begin{align*}
T_{3}=\left(\frac{63}{64}\right)^{\frac{\alpha}{2}}-\left(\frac{3}{4}\right)^{\alpha}-\left(2^{\frac{\alpha}{2}}-1\right)\left(\frac{\sqrt{2}}{4}\right)^{\alpha}-\left(2^{\frac{\alpha}{2}}-1\right)^{2}\left(\frac{1}{4}\right)^{\alpha}.
 \end{align*}

 The lower bound given in \cite[Thm. 2]{JFL} is
\begin{align*}
T_{4}=\left(\frac{63}{64}\right)^{\frac{\alpha}{2}}-\left(\frac{3}{4}\right)^{\alpha}-\frac{\alpha}{2}\left(\frac{\sqrt{2}}{4}\right)^{\alpha}-\left(\frac{\alpha}{2}\right)^{2}\left(\frac{1}{4}\right)^{\alpha}.
 \end{align*}

\begin{figure}[H] 
    \centering 
    \includegraphics[width=0.8\textwidth]{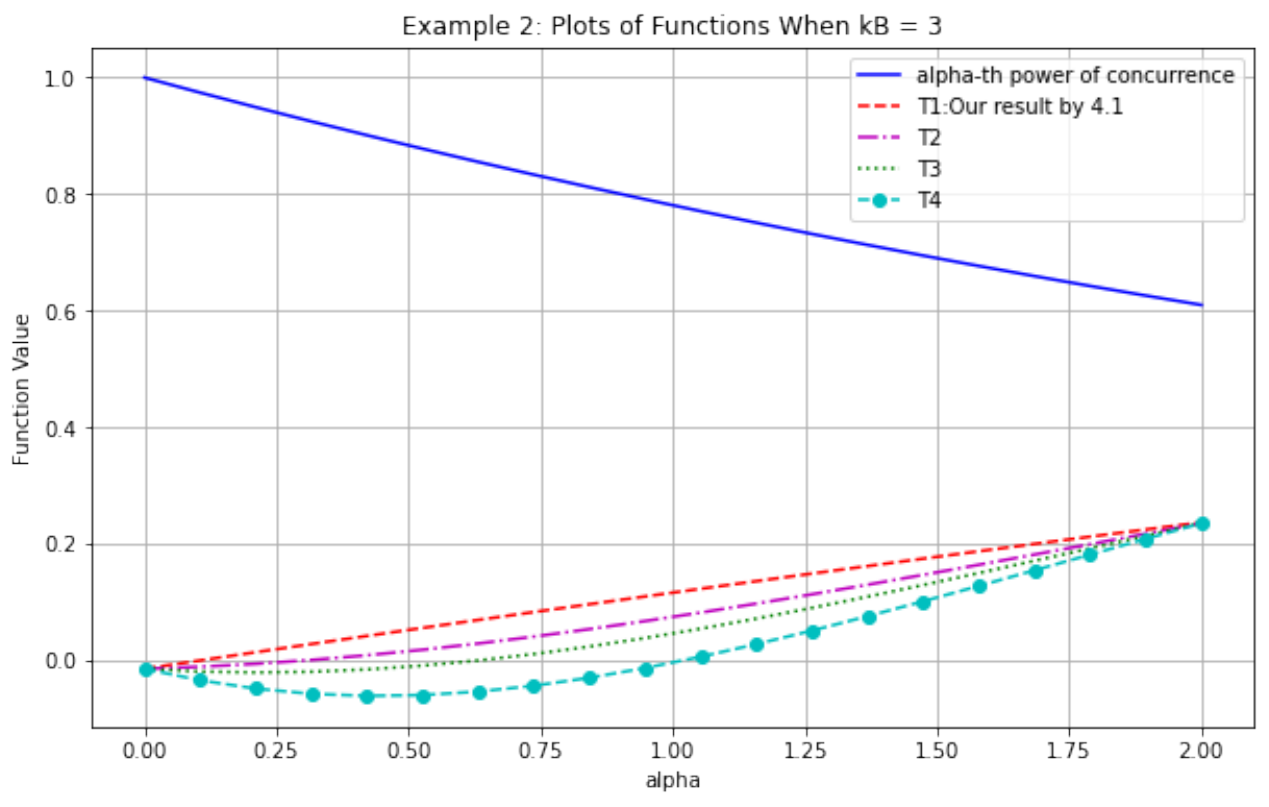} 
\end{figure}
Figure 2: The $x$-axis is $\alpha$. From top to bottom, the solid blue line is the $\alpha$th power of concurrence for $|W\rangle_{AB|C_{1}C_{2}}$. The dashed red line $T_1$, the dash-dot magenta line $T_2$, the dotted green line $T_3$ and the dashed cyan line with circle markers $T_4$ represent 
our bound \eqref{e:chap4.1-ineq1}, those from \cite{SXZWF}, \cite{JF}, and \cite{JFL} respectively.
The graph shows our lower bound $Z_1$ is the strongest.

\item (Polygamy, upper bounds). Consider the generalized polygamy for $C^{\alpha}(|W\rangle_{AB|C_{1}C_{2}})$.
By \eqref{e:chap4.1-ineq2}, our upper bound is
\begin{align*}
X_{1}&=\Theta_{B}+\left(1+\frac{25}{216}\alpha\right)\left(\frac{3}{4}\right)^{\alpha}+\left(\frac{\left(\frac{9}{5}\right)^{\frac{\alpha}{2}}-1-\frac{10}{81}\alpha}{\left(\frac{4}{5}\right)^{\frac{\alpha}{2}}}\right)\left(1+\frac{25}{256}\alpha\right)\left(\frac{3\sqrt{2}}{8}\right)^{\alpha}\\
&+\left(\frac{\left(\frac{9}{5}\right)^{\frac{\alpha}{2}}-1-\frac{10}{81}\alpha}{\left(\frac{4}{5}\right)^{\frac{\alpha}{2}}}\right)\left(\frac{\left(\frac{8}{5}\right)^{\frac{\alpha}{2}}-1-\frac{15}{128}\alpha}{\left(\frac{3}{5}\right)^{\frac{\alpha}{2}}}\right)\left(\frac{3}{8}\right)^{\alpha},
 \end{align*}
 where $\Theta_{B}$ is the same as \eqref{e:chap4.1-ineq7}. 

 The upper bound of \cite[Thm. 4]{SXZWF} is a special case of our bound at $p_{1_{A}}=p_{2_{A}}=1$:
 \begin{align*}
X_{2}&=\Gamma_B+\left(1+\frac{3}{32}\alpha\right)\left(\frac{3}{4}\right)^{\alpha}+\left(2^{\frac{\alpha}{2}}-1-\frac{1}{8}\alpha\right)\left(1+\frac{1}{16}\alpha\right)\left(\frac{3\sqrt{2}}{8}\right)^{\alpha}\\
&\qquad+\left(2^{\frac{\alpha}{2}}-1-\frac{1}{8}\alpha\right)^{2}\left(\frac{3}{8}\right)^{\alpha}.
 \end{align*}
where $\Gamma_{B}$ is given in \eqref{e:chap4.1-ineq8}.

 The upper bound $X_3$ in \cite[Thm. 4]{JF} is given by
\begin{align*}
X_{3}&=2\left(\frac{3}{4}\right)^{\alpha}+\left(2^{\frac{\alpha}{2}}-1\right)\left(\left(\frac{3\sqrt{2}}{8}\right)^{\alpha}+\left(\frac{\sqrt{2}}{4}\right)^{\alpha}\right)+\left(2^{\frac{\alpha}{2}}-1\right)^{2}\left(\left(\frac{3}{8}\right)^{\alpha}+\left(\frac{1}{4}\right)^{\alpha}\right)
\end{align*}

The upper bound given in \cite[Thm. 3]{JFL} is
\begin{align*}
X_{4}&=2\left(\frac{3}{4}\right)^{\alpha}+\frac{\alpha}{2}\left(\left(\frac{3\sqrt{2}}{8}\right)^{\alpha}+\left(\frac{\sqrt{2}}{4}\right)^{\alpha}\right)+\left(\frac{\alpha}{2}\right)^{2}\left(\left(\frac{3}{8}\right)^{\alpha}+\left(\frac{1}{4}\right)^{\alpha}\right)
\end{align*}
\end{enumerate}
 \begin{figure}[H] 
    \centering 
    \includegraphics[width=0.8\textwidth]{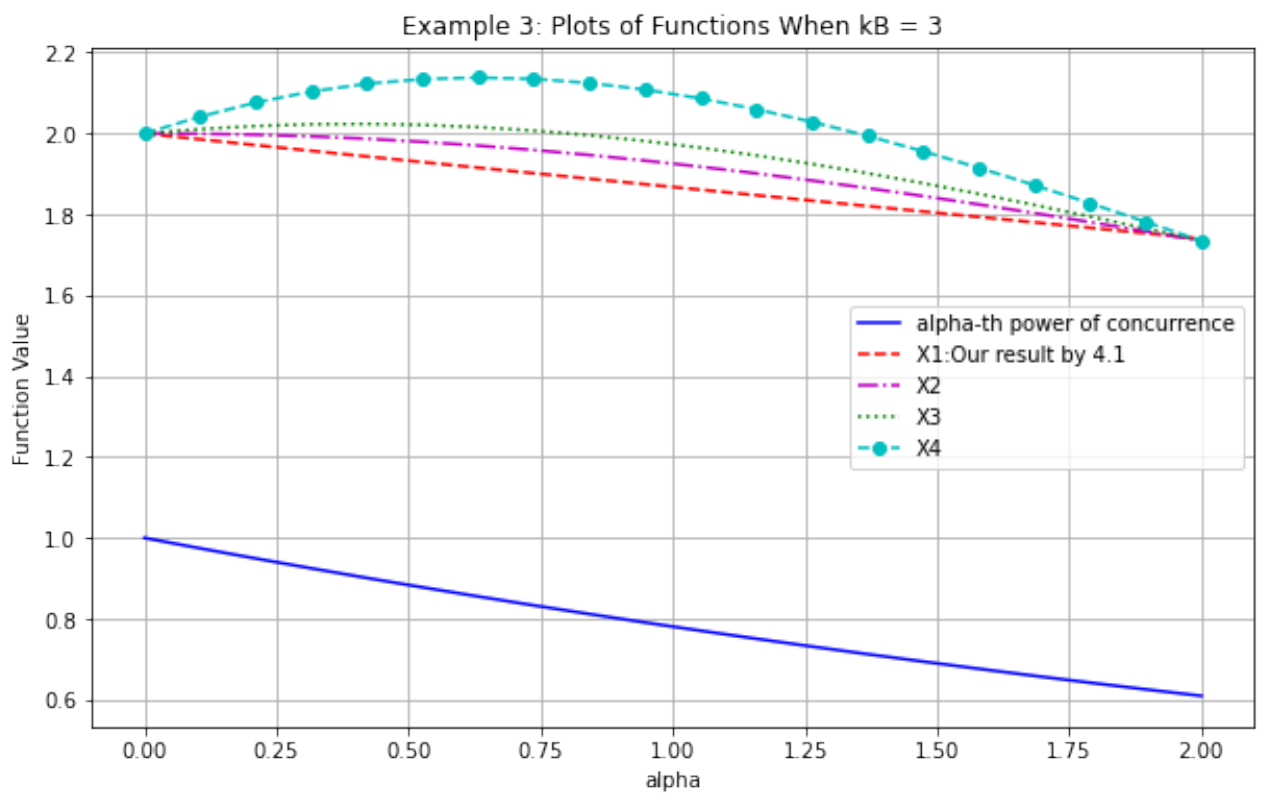} 
\end{figure}
Figure 3: The $x$-axis is $\alpha$. From bottom to top, the solid blue line is the $\alpha$th power of concurrence for $|W\rangle_{AB|C_{1}C_{2}}$, the dashed red line $X_1$ represents the upper bound from our result in \eqref{e:chap4.1-ineq2}, the Dash-dot magenta line $X_2$ represents the result in \cite{SXZWF}, the dotted green line $X_3$ represents the result in \cite{JF}, and the dashed cyan line with circle markers $X_4$ represents the result in \cite{JFL}. The graph shows our upper bound $X_1$ is the strongest.
\end{exmp}

\subsection{\textbf{The optimized bounds of $\mathcal{N}^{\alpha}(|\psi\rangle_{AB|C_{1}\cdots C_{N-2}})$ }} \label{sub:optimized bound-N}
Based on the relationship between negativity and concurrence, we can also obtain similar results for negativity.
Recall that the negativity of bipartite state $\rho_{A B}$ is defined by \cite{VW}: $N\left(\rho_{AB}\right)=\left\|\rho_{AB}^{T_{A}}\right\|-1$, where $T_{A}$ refers to the partial transposition 
and $\|X\|=\operatorname{Tr} \sqrt{X X^{\dagger}}$ is the trace norm of $X$. 

For a mixed state $\rho_{AB}$, the convex-roof extended negativity (CREN) is defined by 
$$N_{ c}\left(\rho_{A B}\right)=\min \sum_i p_i N\left(\left|\psi_i\right\rangle_{AB}\right)$$ 
where the minimum is taken over all possible pure state decompositions $\left\{p_i,\left|\psi_i\right\rangle_{AB}\right\}$ of $\rho_{AB}$. The convex-roof extended negativity of assistance (CRENoA) is then defined by $N_{a}\left(\rho_{AB}\right)=\max \sum_i p_i N\left(\left|\psi_i\right\rangle_{AB}\right)$, where the maximum is taken over all possible pure state decompositions $\left\{p_i,\left|\psi_i\right\rangle_{AB}\right\}$ of $\rho_{AB}$.

 For any bipartite pure state $|\psi\rangle_{AB}$ in a $d\times d$ quantum system with Schmidt rank $d$ (the total number of Schmidt numbers counting multiplicities), Ref. \cite{JF} shows that
\begin{equation}\label{e:chap4.2-ineq1}
\mathcal{N}(|\psi\rangle_{AB})\geq C(|\psi\rangle_{AB}).
\end{equation}
with equality holds when Schmidt rank is $d=2$. In the latter case, one obtains that
 \begin{equation}\label{e:chap4.2-ineq2}
\mathcal{N}_{c}(\rho_{AB})= C(\rho_{AB});\quad \mathcal{N}_{a}(\rho_{AB})= C_{a}(\rho_{AB}).
\end{equation}

 Combining \eqref{e:chap4.2-ineq1} and \eqref{e:chap4.2-ineq2}, an arbitrary $N$-qubit pure state $|\psi\rangle_{AB_{1}\cdots B_{N-1}}$ obeys the
 following monogamy and polygamy inequalities \cite{JF}:
\begin{equation}\label{e:chap4.2-ineq3}
\sum_{i=1}^{N-1} \mathcal{N}_{a}^{2}(\rho_{AB_{i}})\geq \mathcal{N}^{2}(|\psi\rangle_{A|B_{1}\cdots B_{N-1}})\geq \sum_{i=1}^{N-1} \mathcal{N}_{c}^{2}(\rho_{AB_{i}}),
\end{equation}


Meanwhile, by \cite{YCLZ}, for an arbitrary $N$-qubit pure state $|\psi\rangle_{AB_{1}\cdots B_{N-1}}$, we then have
\begin{equation}\label{e:chap4.2-ineq5}
\mathcal{N}^{2}(|\psi\rangle_{AB_{1}|B_{2}\cdots B_{N-1}})\leq \sqrt{\frac{r(r-1)}{2}}C(|\psi\rangle_{AB_{1}|B_{2}\cdots B_{N-1}}),
\end{equation}
 where $r$ is the Schmidt rank of the $N$-qubit pure state $|\psi\rangle_{AB_{1}\cdots B_{N-1}}$ under partition $AB_{1}$ and $B_{2}\cdots B_{N-1}$. For convenience, we will abbreviate: $N_{c(a)AB}=N_{c(a)}\left(\rho_{AB}\right)$.

Similar to concurrence, for any $N$-qubit state $|\psi\rangle_{AB|C_{1}\cdots C_{N-2}}$, $(N\geq4)$,
the $\alpha$th-generalized monogamy and generalized polygamy 
for negativity are straightforward 
conclusions by \eqref{e:chap4.1-ineq1}, \eqref{e:chap4.1-ineq2}, \eqref{e:chap4.2-ineq3}, 
and \eqref{e:chap4.2-ineq5}. Here $0\leq \alpha \leq 2$.
\begin{enumerate}[label={\bf Case \arabic*.}]
\item The lower bound of generalized monogamy for $\mathcal{N}^{\alpha}(|\psi\rangle_{AB|C_{1}\cdots C_{N-2}})$:
\begin{equation}\label{e:chap4.2-ineq6}
\begin{aligned}
\mathcal{N}^{\alpha}(|\psi\rangle_{AB|C_{1}\cdots B_{N-2}})\geq \max \left\{\left(\sum_{i=1}^{N-2}\mathcal{N}^{2}_{cAC_{i}}+\mathcal{N}^{2}_{cAB}\right)^{\frac{\alpha}{2}}-\Theta_{B}^{'},\left(\sum_{i=1}^{N-2}\mathcal{N}^{2}_{cBC_{i}}+\mathcal{N}^{2}_{cAB}\right)^{\frac{\alpha}{2}}-\Theta_{A}^{'}\right\}
 \end{aligned}
 \end{equation}

\item The upper bound of generalized polygamy for $\mathcal{N}^{\alpha}(|\psi\rangle_{AB|C_{1}\cdots C_{N-2}})$:
\begin{equation}\label{e:chap4.2-ineq7}
\begin{aligned}
\mathcal{N}^{\alpha}(|\psi\rangle_{AB|C_{1}\cdots C_{N-2}})\leq \left(\frac{r(r-1)}{2}\right)^{\frac{\alpha}{2}}(\Theta_{A}^{'}+\Theta_{B}^{'})
   \end{aligned}
\end{equation}
 \end{enumerate}
 where  $$\Theta_{j}^{'}=\sum_{l=1}^{k_{j}-1}\prod_{u=0}^{l-1}\Upsilon_{u_{j}}^{'}\Omega_{l_{j}}^{'}\mathcal{N}_{a}^{\alpha}(\rho_{jN_{l_{j}}})+\prod_{u=1}^{k_{j}-1}\Upsilon_{u_{j}}^{'}\mathcal{N}_{a}^{\alpha}(\rho_{jN_{k_{j}}}), $$
with $\Omega_{l_{j}}^{'}=1+\frac{\alpha}{2(1+p_{l_{j}})^{2}}\frac{\sum_{i=l+1}^{k_{j}}\mathcal{N}_{a}^{2}(\rho_{jN_{i_{j}}})}{\mathcal{N}_{a}^{2}(\rho_{jN_{l_{j}}})}, l=1,2,\cdots, k_{j}-1;$ $\Upsilon_{0_{j}}^{'}=1$ and $\Upsilon_{u_{j}}^{'}=\frac{(1+p_{u_{j}})^{\frac{\alpha}{2}}-1}{p_{u_{j}}^{\frac{\alpha}{2}}}-\frac{\alpha p_{u_{j}}}{2p_{u_{j}}^{\frac{\alpha}{2}}(1+p_{u_{j}})^{2}}$, $u=1,2,\cdots, k_{j}-1$, $2\leq k_{j} \leq N-1,$  $j=A,B.$

\bigskip
The following results are clear from the above discussion.
\begin{enumerate}[label= {\bf(\roman*)}]
\item The lower bound of generalized monogamy for $\mathcal{N}^{\alpha}(|\psi\rangle_{AB|C_{1}\cdots C_{N-2}})$ given in \cite[Thm. 6]{SXZWF} is a special case of our  \eqref{e:chap4.2-ineq6} with all $p_{i_{j}}=1,i=1,\cdots,k_{j}-1,j=A,B.$ The authors \cite{SXZWF} have shown that their results are stronger than \cite{JF,JFL}. Therefore our result \eqref{e:chap4.2-ineq6} is a stronger lower bound of generalized monogamy than \cite[Thm. 7]{JF}, \cite[Thm. 5]{JFL}.
\item The upper bound of generalized polygamy for $\mathcal{N}^{\alpha}(|\psi\rangle_{AB|C_{1}\cdots C_{N-2}})$ given in \cite[Thm. 7]{SXZWF} is a special case of our  \eqref{e:chap4.2-ineq7} with all $p_{i_{j}}=1,i=1,2,\cdots,k_{j}-1,j=A,B.$ The authors \cite{SXZWF} have proved that their results are stronger than \cite{JF,JFL}. Thus,
    we remark that our result  \eqref{e:chap4.2-ineq7} is the optimized upper bound of generalized polygamy relation than \cite[Thm. 8]{JF}, \cite[Thm. 6]{JFL}.
\end{enumerate}

\bigskip
Similarly, we can generalize the results in Subsection \ref{sub:optimized bound-C} under the partition $ABC_{1}$ and $C_{2}\cdots C_{N-2}$.
\subsection{\textbf{Stronger bounds of $C^{\alpha}(|\psi\rangle_{ABC_{1}|C_{2}\cdots C_{N-2}})$ }}

For $0\leq \alpha \leq 2$ and any $N$-qubit state $|\psi\rangle_{ABC_{1}|C_{2}\cdots C_{N-2}}$, $(N\geq5)$, the following results are direct by Subsection \ref{sub:optimized bound-C}.

\begin{enumerate}[label={\bf Case \arabic*.}]
\item The lower bound of generalized monogamy for $C^{\alpha}(|\psi\rangle_{ABC_{1}|C_{2}\cdots C_{N-2}})$:
\begin{equation}\label{e:chap4.3-ineq1}
\begin{aligned}
 &C^{\alpha}(|\psi\rangle_{ABC_{1}|C_{2}\cdots C_{N-2}})\\
 &\geq \max \left\{\left(\sum_{i=1}^{N-2}C^{2}_{AC_{i}}+C^{2}_{AB}\right)^{\frac{\alpha}{2}}-\Theta_{B}, \left(\sum_{i=1}^{N-2}C^{2}_{BC_{i}}+C^{2}_{AB}\right)^{\frac{\alpha}{2}}-\Theta_{A}\right\}
 -\Theta_{C_{1}}
 \end{aligned}
 \end{equation}
or
\begin{equation}\label{e:chap4.3-ineq2}
\begin{aligned}
 C^{\alpha}(|\psi\rangle_{ABC_{1}|C_{2}\cdots C_{N-2}})\geq \left(C^{2}_{AC_{1}}+C^{2}_{BC_{1}}+\sum_{i=2}^{N-2}C^{2}_{C_{1}C_{i}}\right)^{\frac{\alpha}{2}}-\Theta_{A}-\Theta_{B}
 \end{aligned}
 \end{equation}
\item The upper bound of generalized polygamy for $C^{\alpha}(|\psi\rangle_{ABC_{1}|C_{2}\cdots C_{N-2}})$:
\begin{equation}\label{e:chap4.3-ineq3}
\begin{aligned}
C^{\alpha}(|\psi\rangle_{ABC_{1}|C_{2}\cdots C_{N-2}})\leq \Theta_{A}+\Theta_{B}+\Theta_{C_{1}}
   \end{aligned}
\end{equation}
 \end{enumerate}
 where $\Theta_{j}$ $(j=A,B,C_{1})$ are defined similarly as \eqref{e:chap4.1-ineq1}, \eqref{e:chap4.1-ineq2}.

\begin{proof}
If $C^{2}(|\psi\rangle_{AB|C_{1}\cdots C_{N-2}}) \geq C^{2}(|\psi\rangle_{C_{1}|AB\cdots C_{N-2}})$, then

\begin{align*}
& C^{\alpha}(|\psi\rangle_{ABC_{1}|C_{2}\cdots C_{N-2}})=(2T(\rho_{ABC_{1}}))^{\frac{\alpha}{2}}\geq |2T(\rho_{AB})-2T(\rho_{C_{1}})|^{\frac{\alpha}{2}} \\
   & =| C^{2}(|\psi\rangle_{AB|C_{1}\cdots C_{N-2}})- C^{2}(|\psi\rangle_{C_{1}|ABC_{2}\cdots C_{N-2}})|^{\frac{\alpha}{2}}\quad \quad(\text{by Eqs. \eqref{e:Pure}, \eqref{e:entropy}, \eqref{e:property} })\\
   &\geq  C^{\alpha}(|\psi\rangle_{AB|C_{1}\cdots C_{N-2}})- C^{\alpha}(|\psi\rangle_{C_{1}|ABC_{2}\cdots C_{N-2}})\quad \quad(\text{by Lemma in \cite{JF}})\\
     &\geq \max \left\{\left(\sum_{i=1}^{N-2}C^{2}_{AC_{i}}+C^{2}_{AB}\right)^{\frac{\alpha}{2}}-\Theta_{B}, \left(\sum_{i=1}^{N-2}C^{2}_{BC_{i}}+C^{2}_{AB}\right)^{\frac{\alpha}{2}}-\Theta_{A}\right\} -\Theta_{C_{1}},
  \end{align*}
 where the last inequality is due to \eqref{e:chap3.1-ineq2}  and \eqref{e:chap4.1-ineq1}.

\bigskip
 Similarly, if $C^{2}(|\psi\rangle_{AB|C_{1}\cdots C_{N-2}}) \geq C^{2}(|\psi\rangle_{C_{1}|AB\cdots C_{N-2}})$, then
\begin{align*}
& C^{\alpha}(|\psi\rangle_{ABC_{1}|C_{2}\cdots C_{N-2}})\\
   &\geq  C^{\alpha}(|\psi\rangle_{C_{1}|ABC_{2}\cdots C_{N-2}})- C^{\alpha}(|\psi\rangle_{AB|C_{1}\cdots C_{N-2}})\quad \quad(\text{by Lemma in \cite{JF}})\\
     &\geq \left(C^{2}_{AC_{1}}+C^{2}_{BC_{1}}+\sum_{i=2}^{N-2}C^{2}_{C_{1}C_{i}}\right)^{\frac{\alpha}{2}}-\Theta_{A}-\Theta_{B}   \quad \quad(\text{by Eqs. \eqref{e:mono-N}, \eqref{e:chap4.1-ineq2}}).
   \end{align*}
On the other hand, we also have
\begin{align*}
& C^{\alpha}(|\psi\rangle_{ABC_{1}|C_{2}\cdots C_{N-2}})=(2T(\rho_{ABC_{1}}))^{\frac{\alpha}{2}}\leq |2T(\rho_{AB})+2T(\rho_{C_{1}})|^{\frac{\alpha}{2}} \\
   & =| C^{2}(|\psi\rangle_{AB|C_{1}\cdots C_{N-2}})+ C^{2}(|\psi\rangle_{C_{1}|ABC_{2}\cdots C_{N-2}})|^{\frac{\alpha}{2}}\quad \quad(\text{by Eqs. \eqref{e:Pure}, \eqref{e:entropy}, \eqref{e:property} })\\
   &\leq  C^{\alpha}(|\psi\rangle_{AB|C_{1}\cdots C_{N-2}})+ C^{\alpha}(|\psi\rangle_{C_{1}|ABC_{2}\cdots C_{N-2}})\quad \quad(\text{by Lemma in \cite{JF}})\\
     &\leq \Theta_{A}+\Theta_{B}+\Theta_{C_{1}} \quad \quad(\text{by Eqs. \eqref{e:chap3.1-ineq2}, \eqref{e:chap4.1-ineq2}})
  \end{align*}
\end{proof}

Thus, we have the following results.
\begin{enumerate}[label= {\bf(\roman*)}]
\item The lower bound of generalized monogamy for $C^{\alpha}(|\psi\rangle_{ABC_{1}|C_{2}\cdots C_{N-2}})$ given in \cite[Cor. 1]{SXZWF} is a special case of our  \eqref{e:chap4.3-ineq1} or  \eqref{e:chap4.3-ineq2} with $p_{i_{j}}=1,i=1,2,\cdots,k_{j}-1,j=A,B,C_{1}.$  We remark that  \eqref{e:chap4.3-ineq1} or  \eqref{e:chap4.3-ineq2} is the optimized lower bound of generalized monogamy relation than \cite[Eq. (15) or Eq. (16)]{JF}, \cite[Eq. (12) or Eq. (13)]{JFL}.
\item The upper bound of generalized polygamy for $C^{\alpha}(|\psi\rangle_{ABC_{1}|C_{2}\cdots C_{N-2}})$ given in \cite[Cor. 2]{SXZWF} is a special case of our  \eqref{e:chap4.3-ineq2} with $p_{i_{j}}=1,i=1,2,\cdots,k_{j}-1,j=A,B,C_{1}.$ We remark that  \eqref{e:chap4.3-ineq3} is the optimized upper bound of generalized polygamy relation than \cite[Eq. (17)]{JF}, \cite[Eq. (14)]{JFL}.
\end{enumerate}
For any $N$-qubit state $|\psi\rangle_{A_{1}\cdots A_{k} B_{1}\cdots B_{N-k}}$, if we consider the partition under $A_{1}\cdots A_{k}$ and $B_{1}\cdots B_{N-k}$, then the second Case can be generalized as follows.
 \begin{rem}
The upper bound of generalized polygamy for $C^{\alpha}(|\psi\rangle_{A_{1}\cdots A_{k}| B_{1}\cdots B_{N-k}})$:
\begin{equation}\label{e:chap4.3-ineq4}
\begin{aligned}
C^{\alpha}(|\psi\rangle_{A_{1}\cdots A_{k}| B_{1}\cdots B_{N-k}}) \leq \sum_{i=1}^{k} \Theta_{A_{i}},
\end{aligned}
\end{equation}
for $0\leq \alpha \leq 2$, where $\Theta_{j}$ $(j=A_{i},i=1,2,\cdots,k)$ are defined similarly as \eqref{e:chap4.1-ineq1} and \eqref{e:chap4.1-ineq2}.
\end{rem}

\section{\textbf{Conclusion}}
By introducing a family of inequalities indexed by parameter $p$ $(0<p\leq 1)$, we get a family of tighter polygamy inequalities of CoA. We illustrate in detail how the parameter helps us to obtain stronger polygamy relations in various cases. We then have derived the generalized monogamy and polygamy inequalities based on the $\alpha$th ($0\leq \alpha \leq 2$) power of concurrence for any $N$-qubit pure states $|\psi\rangle_{ABC_{1}\cdots C_{N-2}}$ under the partition $AB$ and $C_{1}\cdots C_{N-2}$ as well as under the partition $ABC_{1}$ and $C_{2}\cdots C_{N-2}$. Comparisons are given in detailed examples which show that our lower (resp. upper) bounds of the generalized monogamy (resp. polygamy) relations for the $\alpha$th ($0\leq \alpha \leq 2$) power of concurrence are indeed the {\it tightest} in both cases among recent studies.

\bigskip
\centerline{\bf Acknowledgments}

The research is supported in part by the Simons Foundation and
National Natural Science Foundation of China.

\bigskip

{\bf Data availability statement}. All data generated or
analyzed during this study are included in this published
article.

\end{document}